\documentclass[aps]{revtex4}
\textwidth  17.5true cm \textheight 22.8true cm

\oddsidemargin  -1true cm
\evensidemargin -1true cm 

\headsep  .5true cm

\topmargin -1.5true cm

\usepackage{graphicx}
\usepackage{amsthm}
\usepackage{amssymb, amsmath}
\usepackage{color}
\usepackage{soul,xcolor}

\begin{document}
\title{Recursive simulation of quantum annealing}

\author{A P Sowa$^{1}$, M J Everitt$^{2}$,  J H Samson$^{2}$, S.E. Savel'ev$^{2}$, A M Zagoskin$^{2}$, S Heidel$^1$, J C Z\'u\~niga-Anaya$^3$}
\affiliation{$^{1}$Department of Mathematics and Statistics, University of Saskatchewan}
\affiliation{$^{2}${Department of Physics, Loughborough University, Loughborough, Leics LE11 3TU, UK}}
\affiliation{$^3$ Research Computing Group, ICT, University of Saskatchewan }

\newtheorem{definition}{Definition}[section]
\newtheorem{theorem}{Theorem}[section]
\newtheorem{lemma}{Lemma}[section]
\newtheorem{corollary}{Corollary}[section]

\begin{abstract}
The evaluation of the performance of adiabatic annealers is hindered by lack of efficient algorithms for simulating their behaviour. We exploit the analyticity of the standard model for the adiabatic quantum process to develop an efficient recursive method for its numerical simulation in case of both unitary and non-unitary evolution.   Numerical simulations show distinctly different distributions for the  most important figure of merit of adiabatic quantum computing --- the success probability --- in these two cases.
\end{abstract}
\date{\today}
\maketitle
\section{Introduction}\label{Introduction}
 The ongoing debate whether the devices D-Wave One and D-Wave Two, which contain hundreds of superconducting flux qubits, demonstrate quantum or classical annealing \cite{Boxio,Ronnow,Bian,Smolin,Wang,Shih,Crowley,Amin}, is to a large extent due to the fact that their quantitative simulation is at the moment impossible. Characteristically, there was no controversy over quantum behaviour of an 8-qubit register \cite{Johnson}, where a quantitative agreement was found between the experimental data and the theoretical predictions. Though one cannot expect to be able to model classically a large enough quantum system \cite{Feynman}, the   development of efficient classical algorithms could hopefully enable us to model quantum systems just big enough to take over. We can also  exploit the idiosyncratic features of certain quantum  devices, which simplify their modelling by classical means. Here we demonstrate such an algorithm for modelling linear adiabatic evolution, including the case when decoherence is present. This result depends on the analyticity of the final state of the system as a function of evolution time, which is rigorously proven. Our small-scale numerical realizations of this algorithm show that some features of quantum annealing are comparable to the behaviour of D-Wave machines. Since at this stage the simulations do not account for the specific physical conditions that influence the machine's performance we stop short of drawing conclusions as to the strength of signatures of quantumness in the actual device. Nevertheless, we do hope that the tools developed here will enable a better understanding of this important problem.

An adiabatic process evolves the state vector of a quantum annealer over time $T$ along the trajectory $s\mapsto |\psi(s)\rangle$, where the parameter $s\in[0,1]$ is the reduced time $s=t/T$. The initial value $|\psi(0)\rangle$ is the ground state of the initial Hamiltonian $H_i$. When the process stops at $t=T$ the state $|\psi(1)\rangle$ is considered as an  approximant to the ground state of the custom design Hamiltonian $H_f$, which encodes the solution of an optimization problem of choice. We consider the standard linear case, when the evolution is driven by the time-dependent Hamiltonian
\begin{equation}\label{Ham_of_s}
H(s) = (1-s) H_i +s H_f,
\end{equation}
according to the Schr\"{o}dinger equation
\begin{equation}\label{Schr_of_s}
\frac{d}{ds} |\psi(s)\rangle= -i(T/\hbar)\, H(s)\,|\psi(s)\rangle.
\end{equation}

Let the quantum system be an $N$-qubit register.  In order to fix notation we describe the eigenstates and the corresponding eigenvalues of $H(s)$:
\begin{equation}
H(s) \,|m;s\rangle = E_m(s)\,|m;s\rangle,\quad  \mbox{ where }\,\, E_0(s)\leq E_1(s) \leq \ldots \leq E_{2^N-1}(s).
\end{equation}
However, this assumption is not necessary to derive the analytic results presented in next section. As we will see, our analysis remains valid in every case of bounded (in particular, finite) Hamiltonians $H_i, H_f$.

The purpose of analysis and simulation of (\ref{Ham_of_s}) is estimation of the success probability
\begin{equation}\label{success}
P = \|\Pi\psi(1)\|^2 = \sum_{j = 0,1,\ldots d-1} |\langle \psi (1)| j; 1\rangle|^2 ,
\end{equation}
where $\Pi$ denotes the projector onto the ground state space of $H_f$, e.g. for a non-degenerate ground state $P = |\langle \psi(1)| 0;1\rangle |^2$. (In fact, in most cases of interest to us the dimension of the ground state space $d$ exceeds $1$. This is rigorously accounted for in all numerical results presented in this article.) $P$ provides a measure of the accuracy with which the adiabatic process finds the solution of the underlying problem. For the Hamiltonians considered here, the instantaneous ground state $|0; s\rangle$ is non-degenerate for $0 \le s < 1$ (see Remarks), although the ground state of $H_f$ may be highly degenerate.  In a practical optimisation problem any of these degenerate states would be an equally useful solution. While there is a unique vector in the ground state space adiabatically connected to the initial ground state, this will in general be a superposition of computational basis states (a random one of which will be seen in the readout) and not necessarily of interest in the context of optimisation.

Having fixed $T$ and the number of steps and  the Hamiltonian, we repeatedly conduct the numerical experiment. Final Hamiltonians are drawn from an ensemble of binary optimization problems and the simulation outputs the probability of success $P$ for each instance of $H_{f}$. We record the outcome of the series of experiments in a histogram of $P$ values. The histogram of the probability of success has long been adopted as a means for testing hypotheses about adiabatic computing, see e.g. \cite{Farhi} and, more recently, \cite{Boxio}.

\vspace{.5cm}

\noindent \textbf{Remarks.} It is interesting to briefly consider the evolution of the ground state projector  $s\mapsto \Pi_s$, where $\Pi_1 = \Pi$ is as above.   Given that in the case at hand the dependence of $H(s)$ on $s$ is linear and \emph{a fortiori} holomorphic, the theory of analytic perturbations applies and it gives us the following picture: A standard general result, see Theorem 1.8, p. 70, \cite{Kato}, tells us that all the eigenvalues $E_k(s)$ as well as the corresponding eigenprojections, including $\Pi_s$, will depend on s analytically in the entire complex plane. \emph{A priori} these analytic functions may have algebraic type singularities at a set of isolated exceptional points in the complex plane, where the number of distinct eigenvalues is lower than in the generic case. More precisely, the eigenvalues as functions of $s$ are holomorphic in simply-connected domains that exclude the exceptional points. At the exceptional points an eigenvalue function may have, although does not necessarily have to have, a branching point. Moreover, at a branching point the corresponding eigenprojection will have a pole. \emph{A priori}, such a scenario, i.e. hitting a branching point where the eigenprojection has a singularity, might be construed as a “phase transition” of the first kind. From this point of view it is not clear  \emph{a priori} why a result such as Theorem \ref{Radius_infty} should hold. (Note that the result can be reformulated for complex variable $s$ stating that the solution of (\ref{Schr_of_s}) is an entire function of $s$.)

If we restrict the variable $s$ to the real line, the picture outlined above is considerably simplified. Indeed, in this case $H(s)$ is a self-adjoint operator, and satisfies the assumptions of Theorem 6.1, p. 120, \cite{Kato}. This result assures us that both the eigenvalues and the eigenprojections are holomorphic around every point of the real line. In other words, even if some $s_0$ on the real line is an exceptional point, the projection will not have a pole there and none of the eigenvalue functions will branch. However, we emphasize that the dimension of the ground state may jump discontinuously at the exceptional point, e.g. consider $E_0(s)$ with the corresponding projection $\Pi_s$ and $E_1(s)$ with $\tilde\Pi_s$. It may be that $E_0(s_0)=E_1(s_0)$. Both these projections depend on $s$ holomorphically but the projection onto the ground state of $H(s_0)$ will be $\Pi'_{s_0} = \tilde\Pi_{s_0} + \Pi_{s_0}$. This explains why the projections corresponding to the eigenvalues depend on $s$ continuously, while the projection to the ground state may have isolated jump discontinuities. In particular, the latter scenario is bound to play out for at least one point $s_0\in [0,1]$ when the ground state spaces of $H_i$ and $H_f$ have different dimensions. At this juncture it is interesting to invoke a recent result of Zhang et al., \cite{Zhang}, which shows that in the particular case of Hamiltonians we work with in our numerical simulations, see (\ref{Hi})--(\ref{Hf}), the ground state is non-degenerate for all $s\in[0,1)$. This implies that in all cases for which the ground state of $H_f$ is degenerate the minimal gap tends to zero as $s$ approaches $1$ from the left.

\section{Consequences of analyticity}
Since Hamiltonian (\ref{Ham_of_s}) depends on parameter $s$ linearly, the solution of Eq.~(\ref{Schr_of_s}) may be expected to depend on $s$ analytically, at least for small $s$. Thus, we look for solutions in the form of a Taylor series
\begin{equation}\label{Ansatz}
\psi(s) = \psi_0 +s\psi_1 +s^2\psi_2 + s^3\psi_3+\ldots
\end{equation}
In fact, we will demonstrate  that the radius of convergence of series characterizing solutions of Eq.~(\ref{Schr_of_s}) is infinite, i.e. the series converge for all $s$, see Theorem \ref{Radius_infty} below.
In order to simplify notation, let us define auxiliary operators
\begin{equation}\label{auxil}
A = -i(T/\hbar)\, H_i \quad \mbox{and }\,\, B=-i(T/\hbar)\,(H_f-H_i).
\end{equation}
Thus, (\ref{Schr_of_s}) is equivalent to
\begin{equation}\label{Schr_AB}
\frac{d}{ds} |\psi(s)\rangle= (A+Bs)\,|\psi(s)\rangle.
\end{equation}
Below we will consider this equation for general bounded operators $A$ and $B$, and only later draw conclusions about the specific case (\ref{auxil}) relevant to the application at hand.

Substituting (\ref{Ansatz}) in (\ref{Schr_AB}) we readily obtain
\begin{equation}\label{recurrence}
\begin{array}{lll}
|\psi_0\rangle &=& |\psi(0)\rangle \\
|\psi_1\rangle & =& A \,|\psi_0\rangle \\
\vdots & &\\
|\psi_n\rangle & = &\frac{1}{n} \left(A\, |\psi_{n-1}\rangle + B\, |\psi_{n-2}\rangle\right)\quad \mbox{ for } n\geq 2
\end{array}
\end{equation}
We use this recurrence as the core of numerical schemas for simulation of the solutions of (\ref{Ham_of_s}). It allows relatively efficient, as compared to ODE simulation, computation of consecutive terms and consecutive partial sums of the series (\ref{Ansatz}). Numerical experiment shows that this method, implemented directly as a MATLAB script, is typically faster than the standard ODE solver  `ode45' by a factor exceeding 3 for $N\leq 12$ qubits. For $N=14$, in which case the memory constraint plays a significant role, the speed up factor is approximately 60, reducing the computation time from just under one hour to under one minute. Moreover, as described in Section V, a careful implementation of an optimized code allowed us to simulate systems with up to $N=24$.

A similar procedure can be applied if the interpolation is a polynomial of order $m$ in $s$: in that case the three-term recurrence relation will be replaced by an $m+2$-order recurrence.  Indeed, this may also be possible for an analytic interpolating function of known Taylor series.  However, to demonstrate the method we only discuss the simplest (linear) interpolation here.

Note that evaluation of the success probability (\ref{success}) requires only the computation of
\[
P =  |\Pi\left(\psi_0 + \psi_1 + \psi_2 +\ldots\right) |^2.
\]
In other words, the solution is found, immediately as it were, at the point $s=1$ without the need of finding all the intermediate states $|\psi(s)\rangle$. This is in stark contrast to the computation based on an application of an ODE solver to Eq.~(\ref{Schr_of_s}). The current method has been tested against our earlier study of the correlation between gap and success probability\cite{Cullimore}; in that work we used a conventional (Dormand-Prince) solver.
 Note also that if series (\ref{Ansatz}) is known to converge absolutely, then $|\psi(s)\rangle$ automatically satisfies (\ref{Schr_of_s}). Therefore, the main issue at stake is the estimation of the radius of convergence
\begin{equation}\label{def_Rc}
R_{c} = \left(\limsup\limits_{n \rightarrow \infty} \|\psi_n\|^{1/n}\right)^{-1}.
\end{equation}
Here $\|\,\|$ denotes the $\ell_2$ norm. In general $R_c$ depends on the constituents of the process $H_i, H_f$ and possibly even $|\psi(0)\rangle$. From the point of view of simulations it is ideal to have $R_c = \infty$, which ensures that $|\psi(s)\rangle$ given in (\ref{Ansatz}) is the solution of (\ref{Schr_of_s}) for all (reduced) times $s = t/T$. If on the other hand, $R_c <1$, then the series  cannot be used to estimate the success probability. But, as we shall see, the series (\ref{Ansatz}) converges everywhere.

Let us introduce the following notation:
\begin{equation}
a := \|A\| , \quad b := \|B\| .
\end{equation}
Here, $\|\,\|$ is the operator norm.
Throughout the article we assume $0< a,b < \infty$. For the particular case of (\ref{auxil}), we have $a  = (T/\hbar)\,\|H_i\|, \quad b =  (T/\hbar)\,\|H_i-H_f\|$.  	The Hamiltonians are nontrivial and bounded, e.g. finite dimensional, and, moreover, $H_i\neq H_f$. We have the following result:

\begin{theorem}\label{Radius_infty}
$R_c = \infty$, i.e. series (\ref{Ansatz}) converges absolutely in the entire complex plane and its limit $\psi(s)$ for $s\in [0, \infty)$ is the solution of (\ref{Schr_AB}) satisfying the initial condition $\psi(0) = \psi_0$.
\end{theorem}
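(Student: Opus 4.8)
The plan is to reduce the vector-valued estimate to a one-dimensional majorant problem whose generating function can be written in closed form. Writing $c_n := \|\psi_n\|$, I would first apply the triangle inequality and submultiplicativity of the operator norm to the recurrence (\ref{recurrence}) to obtain the scalar inequalities $c_0 = \|\psi_0\|$, $c_1 \le a\,c_0$, and $c_n \le \tfrac{1}{n}\bigl(a\,c_{n-1} + b\,c_{n-2}\bigr)$ for $n \ge 2$. These do not by themselves yield a usable closed form, so the key move is to introduce the companion sequence defined by \emph{equality}, namely $d_0 = c_0$, $d_1 = a\,c_0$, and $d_n = \tfrac{1}{n}\bigl(a\,d_{n-1} + b\,d_{n-2}\bigr)$ for $n\ge 2$, and then to prove $c_n \le d_n$ for all $n$ by a straightforward induction: the base cases hold by construction, and the inductive step is immediate because $a,b>0$ make the recurrence monotone in its inputs.

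The second step is to recognize the generating function of the majorant. Setting $g(s) := \sum_{n\ge 0} d_n s^n$, the recursion for $d_n$ is exactly the coefficient form of the scalar linear ODE $g'(s) = (a + b s)\,g(s)$ with $g(0) = c_0$; indeed, matching the coefficient of $s^{n-1}$ in this ODE reproduces $n\,d_n = a\,d_{n-1} + b\,d_{n-2}$. This equation integrates in closed form to $g(s) = c_0\,\exp\!\bigl(a s + \tfrac{1}{2} b s^2\bigr)$, which is entire, being the exponential of a polynomial. Consequently its Taylor series has infinite radius of convergence, so by Cauchy--Hadamard $\limsup_{n\to\infty} d_n^{1/n} = 0$. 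Combining with $c_n \le d_n$ gives $\limsup_{n\to\infty}\|\psi_n\|^{1/n} \le \limsup_{n\to\infty} d_n^{1/n} = 0$, and hence $R_c = \infty$ by the definition (\ref{def_Rc}).

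Finally, I would record the promised consequence for the equation itself. With $R_c = \infty$ the series (\ref{Ansatz}) converges absolutely and uniformly on compact subsets of $\mathbb{C}$, so it defines an entire Hilbert-space-valued function $\psi(s)$, and term-by-term differentiation is justified on any disc, giving $\psi'(s) = \sum_{n\ge1} n\,\psi_n s^{n-1}$. Substituting the recurrence (\ref{recurrence}) and regrouping recovers $\psi'(s) = (A + Bs)\,\psi(s)$, while the $n=0$ term yields $\psi(0) = \psi_0$; this establishes that $\psi$ solves (\ref{Schr_AB}) with the stated initial condition, in particular for all $s\in[0,\infty)$.

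I do not expect a serious obstacle: the estimate is soft, relying only on the operator-norm bounds $a,b < \infty$ and on the fortunate fact that the worst-case (equality) recurrence is the coefficient recursion of an explicitly solvable ODE whose solution is manifestly entire. The only points requiring care are the correct alignment of the base cases $c_0,c_1$ in the induction and the bookkeeping of indices when matching the ODE to its coefficient recurrence, neither of which is substantive. The conceptual content is simply that the factor $1/n$ in (\ref{recurrence}) --- absent from a naive geometric bound --- is precisely what tames the growth of $\|\psi_n\|$ down to entire (order-two) decay.
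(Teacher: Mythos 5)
Your proof is correct, and in fact your majorant is \emph{identical} to the paper's: your sequence $d_n$ equals $\frac{1}{n!}p_n\|\psi_0\|$, where $p_n$ is the scalar sequence the paper defines by $p_{n+1}=ap_n+nbp_{n-1}$. The difference is entirely in how the decay of this majorant is established, and there your route is genuinely different and slicker. The paper works out the coefficients of $p_n$ as an explicit polynomial in $a,b$ (proving $c_k[n]=(2k-1)!!\binom{n}{2k}$ by induction), rewrites $\frac{1}{n!}p_n=\sum_k \frac{1}{k!(n-2k)!}a^{n-2k}(b/2)^k$, and then bounds this with the elementary inequality $k!(n-2k)!\geq\left[\frac{n}{3}\right]!$ to get $\|\psi_n\|^{1/n}\leq \left(\left[\frac{n}{3}\right]!\right)^{-1/n}a\left(1+\frac{b}{2a^2}\right)^{1/2}\|\psi_0\|^{1/n}\to 0$. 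You instead observe that the generating function of $d_n$ solves $g'=(a+bs)g$, hence equals $c_0\exp(as+\tfrac12 bs^2)$ — which is exactly the closed form whose Taylor coefficients the paper is computing by hand — and conclude by Cauchy--Hadamard that the coefficients of an entire function have $n$-th roots tending to zero. Your argument is shorter and conceptually cleaner, and correctly isolates the role of the $1/n$ factor. What the paper's more laborious version buys is an \emph{explicit quantitative} bound on $\|\psi_n\|^{1/n}$, which is then reused in the algorithm-analysis section to justify the stopping criterion and the estimate $n\sim \varepsilon^{-3}O(N^6)$ for the number of Taylor coefficients; your soft argument gives $R_c=\infty$ but no rate (though one could recover rates from $g$ via Cauchy estimates on circles of radius $r$). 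Your final paragraph on term-by-term differentiation is the standard justification the paper only alludes to, and is fine.
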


\begin{proof}
Recurrence (\ref{recurrence}) readily implies
\begin{equation}\label{psi_from_P}
|\psi_n\rangle = \frac{1}{n !}P_n\, |\psi_0\rangle,
\end{equation}
where $P_n = P_n(A,B)$ are operators defined via recurrence
\begin{equation}\label{recurrence_P}
\begin{array}{lll}
P_0 &= &I \\
P_1  &=& A \\
\vdots & &\\
P_{n+1} & = & AP_n + nBP_{n-1}\quad \mbox{ for } n\geq 1.
\end{array}
\end{equation}
Note that we have $\|P_0\| =1$, $\|P_1\| = a$, and $\|P_{n+1}\| = \| AP_n + nBP_{n-1}\| \leq a \|P_n\| +n b \|P_{n-1}\|$. We wish to estimate the rate of growth of $\|P_n\|$. To this end let us consider an auxiliary scalar sequence $(p_n)$ defined via recurrence
\begin{equation}\label{recurrence_p}
\begin{array}{lll}
p_0 &= &1 \\
p_1  &=& a \\
\vdots & &\\
p_{n+1} & = & ap_n + nbp_{n-1}\quad \mbox{ for } n\geq 1.
\end{array}
\end{equation}
Since $p_0 = \|P_0\|, p_1 =\|P_1\|$ and $p_n$ grows at least as fast as $\|P_n\|$, we clearly have
\begin{equation}\label{stage}
\|P_n\| \leq p_n.
\end{equation}
 Next, we undertake to estimate the growth rate of $(p_n)$. First, observe that $p_n = p_n(a,b)$ may be viewed as a polynomial in two variables, e.g. $p_2 = a^2 +b$, $p_3 = a^3 + 3ab$, etc. It is easily seen that the general form of the polynomials is
 \begin{equation}\label{gen_p}
 p_n = c_0[n]a^n + c_1[n] a^{n-2}b + c_2[n] a^{n-4}b^2 +  c_3[n] a^{n-6}b^3 + \ldots  = \sum\limits_{k=0}^{[\frac{n}{2}]}c_k[n]\, a^{n-2k}b^k,
 \end{equation}
 where the coefficients $c_k[n]$ remain to be found. Note that the last term of the polynomial is either $c_{n/2}[n]\, b^{n/2}$ (when $n$ is even) or $c_{[n/2]}[n]\, a b^{[n/2]}$ (when $n$ is odd), with $[x]$ denoting the integer part of $x$.
Moreover, it is easily seen that (\ref{recurrence_p}) implies
\begin{equation}\label{recurrence_pp}
\begin{array}{lll}
c_0[n] &= &1  \\
&& \\
c_1[n]  &=& \binom{n}{2} \\
\vdots & &\\
c_k[n]& = & c_k[n-1] + (n-1) c_{k-1}[n-2]
\end{array}
\end{equation}
Using this, and applying induction, one readily obtains an explicit formula
\begin{equation}\label{c_formula}
c_k[n] = (2k-1)!!  \binom{n}{2k},
\end{equation}
where $(2k-1)!! = 1\cdot3\cdot 5\cdot \ldots (2k-1)$.
In light of this (\ref{gen_p}) yields
\begin{equation}\label{term_T}
\frac{1}{n!} \, p_n = \sum\limits_{k=0}^{[\frac{n}{2}]}\frac{1}{k!(n-2k)!} a^{n-2k}\frac{b^k}{2^k} .
\end{equation}
  Next, we make the following observation
 \begin{equation}\label{inequality}
 k! (n-2k)! \geq \left[\frac{n}{3}\right]!\quad \mbox{ for } k = 0,1,2,\ldots , [n/2] .
 \end{equation}
 Indeed, we either have $k\geq \left[\frac{n}{3}\right]$ or else $n-2k > n -2 \left[\frac{n}{3}\right] \geq \left[\frac{n}{3}\right] $. In either case (\ref{inequality}) follows trivially.

 Next,  we obtain from (\ref{term_T}) and (\ref{inequality}):
 \begin{equation}\label{term_T_est}
 \begin{array}{lll}
 \frac{1}{n!} \, p_n &=& \sum\limits_{k=0}^{[\frac{n}{2}]}\frac{1}{k!(n-2k)!} a^{n-2k}\frac{b^k}{2^k} \\
 & \leq & \frac{1}{\left[\frac{n}{3}\right]!}a^n\{1 + \frac{b}{2a^2} + \left(\frac{b}{2a^2}\right)^2  + \left(\frac{b}{2a^2}\right)^3 +\ldots + \left(\frac{b}{2a^2}\right)^{[n/2]}\} \\
 & \leq & \frac{1}{\left[\frac{n}{3}\right]!} a^n (1+ \frac{b}{2a^2})^{n/2}.
 \end{array}
\end{equation}
 As is well known, $\left(\left[\frac{n}{3}\right]!\right) ^{1/n} \rightarrow \infty$ as $n\rightarrow \infty$. Thus, recalling definition (\ref{psi_from_P}), we obtain
 \begin{equation}\label{psi_est}
 \|\psi_n\|^{1/n} \leq \|\frac{1}{n!}P_n\|^{1/n} \|\psi_0\|^{1/n}\leq \left(\frac{1}{n!}\,p_n\right)^{1/n} \|\psi_0\|^{1/n} =
 \frac{1}{\left(\left[\frac{n}{3}\right]!\right)^{1/n}} a\left(1+ \frac{b}{2a^2}\right)^{1/2} \|\psi_0\|^{1/n}\rightarrow 0,
 \end{equation}
 which by (\ref{def_Rc}) implies $R_c = \infty$. $\Box$
\end{proof}

\noindent
\textbf{Remark 1.} Note that when $b < 2a^2$  (valid in the adiabatic regime)  estimate (\ref{term_T_est}) may be replaced by a more efficient one. Indeed in such a case
\[
1 + \frac{b}{2a^2} + \left(\frac{b}{2a^2}\right)^2  + \left(\frac{b}{2a^2}\right)^3 +\ldots  \leq \frac{1}{1-\frac{b}{2a^2}}
\]
If in addition $a<1$, then terms $p_n/n!$ diminish very fast which ensures fast convergence of series (\ref{Ansatz}) when $s=1$, also in the numerical sense.  However, this is not of interest in the current problem as it represents the anti-adiabatic limit, as the number of spin flips during the evolution is of order $a$. In the special case of interest $a  = (T/\hbar)\|H_i\|, b =  (T/\hbar)\|H_i-H_f\|$ and the condition $a<1\,\& \, b< 2a^2$ is equivalent to
\[
\hbar \,\frac{\|H_i-H_f\|}{2\,\|H_i\|^2} < T < \hbar\, \frac{1}{\|H_i\|},\, \mbox{ which implies }
\|H_i-H_f\| < 2\,\|H_i\|.
\]
This assumption on $T$ ensures the most efficient computation of series (\ref{Ansatz}). However, we do not claim that this restriction delineates the only regimes in which computation is effective.

\section{The master equation} \label{master_subsection}

Now we will extend the above results to the evolution described by the master equation (see e.g. \cite{Percival}):
\[
\frac{d}{dt}\,\rho = - \frac{i}{\hbar} [H,\rho] + L\rho L^{\dag} - \frac{1}{2} L^{\dag}L\rho - \frac{1}{2} \rho L^{\dag}L .
\]
Here, $\rho$ is the density matrix, $H$ is the quantum system Hamiltonian and the Lindblad operator $L$ accounts for decoherence due to the uncontrollable interaction of the quantum system with the environment. In some applications the master equation involves a finite number of different Lindblad operators. While for our purposes it is sufficient to have just one Lindblad, the results presented here can easily be extended to the more general case. From now on we will only consider the evolution of finite-dimensional systems. Furthermore, as before we assume that $H$ takes the special form (\ref{Ham_of_s}) with $s = t/T$. Substituting variables (\ref{auxil}) as before, we obtain an equivalent form of the master equation
 \begin{equation}\label{Lind}
 \frac{d}{ds}\,\rho =  \mathcal{L}_A [\rho] + s \mathcal{L}_B [\rho]
 \end{equation}
 with the shorthand notation
 \begin{equation}\label{def_LA}
 \mathcal{L}_A [\rho]: = [A, \rho] + T\left(L\rho L^{\dag} - \frac{1}{2} L^{\dag}L\rho - \frac{1}{2} \rho L^{\dag}L \right), \quad \mbox{ and } \quad
  \mathcal{L}_B [\rho] := [B, \rho].
 \end{equation}
Note that the evolution equation (\ref{Lind}) generalizes the von Neumann equation
 \begin{equation}\label{Schr_rho}
\frac{d}{ds} \rho(s)= -i(T/\hbar)\, \, [H(s), \rho(s) ].
 \end{equation}
In the context of adiabatic computing, it is vital to realize that there is an essential difference between the two types of evolution. Namely, at least if certain general conditions on $H_f$ are satisfied, the adiabatic theorem applies to the case (\ref{Schr_rho}) and ensures that the probability of success increases with increasing $T$ and will tend to one as $T$ tends to infinity. In contrast, there is no known generalization of the adiabatic theorem, extending it to the case of (\ref{Lind}). In fact, in the course of numerical experimentations we have observed that for a relatively large $L$ the probability of success may decrease with increasing time $T$. Figure 3 illustrates this phenomenon. This observation agrees with the conclusions of Ref.~\cite{Sarandy} that there is a finite time window for the operation of a quantum annealer interacting with the thermal bath.

In order to describe the analytic properties of (\ref{Lind}) we introduce the Hilbert space structure on the linear space of $K\times K$ matrices (e.g. $K = 2^N$ for an $N$ qubit system). For $X\in \mathbb{C}^{K\times K}$ we define the Hilbert-Schmidt norm $\|X \|_{HS} = \mbox{Tr}\, (XX^\dag)^{1/2}$.   This norm endows $\mathbb{C}^{K\times K}$ with the Hilbert space structure.  We denote the corresponding operator norms of $\mathcal{L}_A, \mathcal{L}_B:  \mathbb{C}^{K\times K} \rightarrow \mathbb{C}^{K\times K}$ simply by $\|\mathcal{L}_A \|, \|\mathcal{L}_B \|$. As is well known, the Hilbert-Schmidt norm has the submultiplicative property $\|A B\|_{HS}\leq \|A \|_{HS} \,\|B \|_{HS}$ which, when applied to (\ref{def_LA}), yields the following estimates:
 \begin{equation}\label{norm_est}
\|\mathcal{L}_A \| \leq 2 \|A\|_{HS} + 2 T\, \|L\|_{HS} ^2, \quad \|\mathcal{L}_B \| \leq 2 \|B\|_{HS} .
 \end{equation}

 Next, we wish to consider solutions of (\ref{Lind}). To this end we introduce the Taylor series Ansatz
\begin{equation}\label{Ansatz_rho}
\rho(s) = \rho_0 +s\rho_1 +s^2\rho_2 + s^3\rho_3+\ldots
\end{equation}
As before, we substitute (\ref{Ansatz_rho}) into (\ref{Lind}). In this way we obtain the following description of the Taylor series coefficients:
\begin{equation} \label{rho_n}
 \rho_n = \frac{1}{n!}\, Q_n(\mathcal{L}_A , \mathcal{L}_B ) \rho_0,
\end{equation}
where $ Q_n = Q_n(\mathcal{L}_A , \mathcal{L}_B )$ is a polynomial in variables $\mathcal{L}_A$ and $\mathcal{L}_B$ defined by recursion:
\begin{equation}\label{recurrence_Q}
\begin{array}{lll}
Q_0 &= &I \\
Q_1  &=& \mathcal{L}_A  \\
\vdots & &\\
Q_{n+1} & = & \mathcal{L}_A Q_n + n \mathcal{L}_B Q_{n-1}\quad \mbox{ for } n\geq 1.
\end{array}
\end{equation}
We observe that the proof of Theorem \ref{Radius_infty} can be repeated almost verbatim with obvious modifications such as: replacing  $A$ with $\mathcal{L}_A$,  $B$ with $\mathcal{L}_B$, and $P_n$ with $Q_n$, and redefining  $a:=\|\mathcal{L}_A\|$ and $b:=\|\mathcal{L}_B\|$. This yields the following result:

\begin{theorem}\label{Radius_infty_rho}
Series (\ref{Ansatz_rho}) with coefficients defined via (\ref{rho_n}) and (\ref{recurrence_Q}) converges absolutely in the entire complex plane. Moreover, its limit $\rho(s)$ for $s\in [0, \infty)$ is the solution of (\ref{Lind}) satisfying the initial condition $\rho(0) = \rho_0$.
\end{theorem}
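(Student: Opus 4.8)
The plan is to exploit the observation, already flagged by the authors, that the recurrence \ref{recurrence_Q} for the superoperators $Q_n$ is formally identical to the recurrence \ref{recurrence_P} for the operators $P_n$ in Theorem \ref{Radius_infty}, the only change being that the underlying Hilbert space is now the space of $K\times K$ matrices equipped with the Hilbert--Schmidt inner product rather than the state space $\mathbb{C}^K$. Accordingly I would first fix the abstract setting: regard $\mathcal{L}_A$ and $\mathcal{L}_B$ as bounded linear operators on the Hilbert space $(\mathbb{C}^{K\times K}, \|\cdot\|_{HS})$, and set $a := \|\mathcal{L}_A\|$, $b := \|\mathcal{L}_B\|$, where $\|\cdot\|$ denotes the induced operator (super)norm. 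Finiteness $a,b<\infty$ is guaranteed in finite dimension and is made quantitative by the estimates \ref{norm_est}; this is precisely the only information about the process that the argument of Theorem \ref{Radius_infty} actually uses.

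I would then carry out the proof in three steps. First, substitute the Ansatz \ref{Ansatz_rho} into \ref{Lind} and match equal powers of $s$ to confirm the closed form \ref{rho_n} together with the three-term recurrence \ref{recurrence_Q}; this is the exact analog of the derivation of \ref{recurrence} and requires no new idea. Second, pass to norms. The crucial point is that the space of bounded operators on a Hilbert space is a Banach algebra under composition, so the operator (super)norm is submultiplicative: $\|\mathcal{L}_A Q_n\| \le a\,\|Q_n\|$ and $\|\mathcal{L}_B Q_{n-1}\| \le b\,\|Q_{n-1}\|$. Hence $\|Q_{n+1}\| \le a\,\|Q_n\| + n b\,\|Q_{n-1}\|$, which is exactly the inequality governing the scalar majorant $p_n$ of \ref{recurrence_p}. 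Since $\|Q_0\| = 1 = p_0$ and $\|Q_1\| = a = p_1$, an immediate induction gives $\|Q_n\| \le p_n$ for all $n$.

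With the majorization in hand I would import, verbatim, the combinatorial analysis of Theorem \ref{Radius_infty}: the explicit coefficient formula \ref{c_formula}, the elementary factorial bound \ref{inequality}, and the resulting estimate \ref{term_T_est}. Because $\|\rho_n\|_{HS} \le \tfrac{1}{n!}\,\|Q_n\|\,\|\rho_0\|_{HS} \le \tfrac{1}{n!}\,p_n\,\|\rho_0\|_{HS}$, the chain \ref{psi_est} applies word for word with $\|\psi_n\|$ replaced by $\|\rho_n\|_{HS}$ and $\|\psi_0\|$ by $\|\rho_0\|_{HS}$, yielding $\|\rho_n\|_{HS}^{1/n} \to 0$ and therefore $R_c = \infty$ by \ref{def_Rc}. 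Absolute convergence on every disc legitimates termwise differentiation of \ref{Ansatz_rho}; feeding the differentiated series back into \ref{Lind} and invoking \ref{recurrence_Q} shows that the sum $\rho(s)$ solves the master equation with $\rho(0) = \rho_0$, which completes the argument.

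I do not expect a genuine obstacle here, precisely because the recurrence is structurally identical. The one point that does deserve a brief independent check --- and which I would not leave implicit --- is that passing from vectors to matrices preserves the submultiplicative estimate $\|\mathcal{L}_A Q_n\| \le \|\mathcal{L}_A\|\,\|Q_n\|$; this is exactly the statement that $\mathcal{L}_A$, $\mathcal{L}_B$ and the $Q_n$ all live in the Banach algebra of bounded superoperators on $(\mathbb{C}^{K\times K}, \|\cdot\|_{HS})$, after which the scalar majorization and the entire growth estimate transfer unchanged. A secondary remark worth making explicit is that the estimate uses only $a,b<\infty$ and no spectral data about $H_i$, $H_f$ or $L$, so the presence of the dissipative Lindblad terms in \ref{def_LA} costs nothing beyond enlarging the constant $a$ through \ref{norm_est}.
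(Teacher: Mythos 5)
Your proposal is correct and follows exactly the paper's route: the authors prove this theorem by observing that the proof of Theorem \ref{Radius_infty} carries over almost verbatim after replacing $A$, $B$, $P_n$ with $\mathcal{L}_A$, $\mathcal{L}_B$, $Q_n$ and redefining $a:=\|\mathcal{L}_A\|$, $b:=\|\mathcal{L}_B\|$ as operator norms on $(\mathbb{C}^{K\times K},\|\cdot\|_{HS})$, which is precisely your argument. Your explicit check that the superoperators form a Banach algebra so that submultiplicativity transfers is a useful detail the paper leaves implicit, but it is not a different approach.
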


\noindent
\textbf{Remark 2.} Consider the unitary map $U(s)$ defined by the adiabatic Schr\"{o}dinger equation:
\[
\frac{d}{ds} U(s) = -i(T/\hbar)\, H(s)\, U(s), \quad \mbox{ so that } U(s)|\psi (0)\rangle = |\psi (s)\rangle.
\]
Note that recurrence (\ref{recurrence_P}) applied in the special case --- i.e. $A = -i(T/\hbar)\,H_i$, $B=-i(T/\hbar)\,(H_f-H_i)$ --- provides a constructive description or simulation of $U(s)$ via the formula
\[
U(s) = \sum\limits_{n=0}^\infty \frac{1}{n!}\, P_n.
\]
In particular this description of the semigroup $s\mapsto U(s)$ may be used to simulate the evolution of mixed states (\ref{Schr_rho}). Indeed,
 \[
\frac{d}{ds} \rho(s)= -i(T/\hbar)\, \, [H(s), \rho(s) ]\quad \Longrightarrow\quad \rho(s) = U(s)\rho(0) U(s)^*.
 \]

\section{Algorithm analysis}\label{alg_efficiency}

We will utilize elements of the proof of Theorem \ref{Radius_infty} to analyze the efficiency of the recurrence algorithm (\ref{recurrence}) with the following specific problem constituents:
\begin{eqnarray}
H_i &=& - \sum\limits_k \sigma_k^x, \quad k\in {1, 2, \ldots, N}, \label{Hi} \\
H_f &=& H_{\mbox{Ising}} = -\sum\limits_{k<l} J_{kl}\,\sigma_k^z\sigma_l^z, \quad k,l\in {1, 2, \ldots, N}.\label{Hf}
\end{eqnarray}
 Here, $J_{ij}\in\{-1, 1\}$ denote  random variables with uniform distribution $P(J_{ij}=1)=P(J_{ij}=-1)= 1/2$, and $N$ denotes the number of qubits.

First, in order to estimate the efficiency of a recurrence one needs to establish a stopping condition. It is desirable to stop recurrence when the approximation error is sufficiently small. However, little is known a priori about the solution that is sought in (\ref{recurrence}) and so the calculus methods do not seem to provide an easy estimate for the dependence of error on the degree of approximation $n$. As a substitute we propose the following stopping condition:
\begin{equation}\label{stopping}
\mbox{Compute } \Psi_k \mbox{ for } k = 1,2,\ldots n, \mbox{ where  } n \mbox{ is the smallest index such that } \|\Psi_n\|^{1/n} \leq \varepsilon .
\end{equation}
 Here, $\varepsilon$ is regarded as a preset measure of accuracy. In light of the proof of Theorem \ref{Radius_infty} we know that $n$ defined by (\ref{stopping}) is bound to be finite. Moreover, since the solution $\Psi(s)$ is an analytic function in the complex plane this is expected to provide a good estimate of the remainder error when $s$ is bounded, e.g. $s \sim 1$.

Next, consider the particular case $T = 1$ (recall also $\hbar =1$). Let us assume that the Taylor series is centred at $s_0$. In such a case $a = \|H_i + s_0(H_i-H_f)\| = O(\|H_f\|) = O(N^2)$, $b = \|H_i - H_f\| = O(\|H_f\|) = O(N^2)$ which readily implies
\begin{equation}\label{anb}
a\left(1+ \frac{b}{2a^2}\right)^{1/2} = \left(a^2+ \frac{1}{2}b\right)^{1/2} = O(N^2).
\end{equation}
It now follows from (\ref{psi_est}), (\ref{stopping}) and the Stirling formula that
\[
 a\left(1+ \frac{b}{2a^2}\right)^{1/2}  \sim \, \varepsilon \left(\left[\frac{n}{3}\right]!\right)^{1/n} \sim \, \varepsilon \, n^{1/3}, \quad \mbox{ i.e. }\,\,
 n \sim \frac{1}{\varepsilon^3}O(N^6)
\]

Now, let us consider $T>>1$. While Theorems \ref{Radius_infty} and \ref{Radius_infty_rho} ensure that the Taylor series defining solutions of Eqns. (\ref{Schr_AB}) and (\ref{Lind}) converge in the entire complex plane, due to the roundoff errors, the Taylor series need not converge for relatively large values of $T$ as intermediate sums become exponentially large in the adiabatic regime.  The largest term $x^{n}/n!$ in the Taylor expansion of $e^{x}$ occurs when $n\approx |x|$ and is of the order of $e^{|x|}$.   To obtain solutions in such cases we use the following observation. Defining $\tau = s-s_0$ we observe that $\psi(\tau)$ satisfies (\ref{Schr_AB}) where $s$ has been replaced by $\tau$  and $A$ replaced by $A +s_0B$ with the initial condition $\psi|_{\tau = 0} = \psi(s_0)$. Similarly, $\rho(\tau)$ satisfies (\ref{Lind}) where $s$ has been replaced by $\tau$  and $A$ replaced by $A +s_0B$ with the initial condition $\rho|_{\tau = 0} = \rho(s_0)$. This enables one to conduct computation in stages by partitioning the $s$-interval $[0,1]$ into smaller segments or subintervals. Practice indicates that
the best course of action is to carry out the computation in $\sim T$ subintervals. Since computation within each segment is based on a faster converging Taylor series we obtain convergence of the solution in the entire $s$-interval for larger values of $T$.

Note that the estimate (\ref{anb}) holds for every subinterval, which ensures that every segment requires the computation of $O(N^6)$ coefficients.
Finally, it is easily seen that the number of arithmetical operations necessary to find $n$ Taylor coefficients is of rank $O(n 2^N)$. Thus, the overall number of arithmetical operations required to carry out the computation is of rank $~O(T\cdot n \cdot 2^N) = O(T\cdot N^6\cdot 2^N) = O(2^N)$. This is a polynomial time algorithm when measured against the classical problem size $2^N$. Naturally, and not very surprisingly, it is exponential in the number of qubits. The important feature of this algorithm is that it scales linearly with time $T$.

As an example, we consider the Landau-Zener Hamiltonian $H(s) = (1-2s)\sigma_z + \Delta \sigma_x$. It is easily seen that its minimal gap is given by $E_1(s)-E_0(s) = 2\sqrt{\Delta^2 +(1-2s)^2}$, with the minimum occurring at $s=.5$. Consider the solution $\Psi(s)$ of (\ref{Schr_of_s}) with $\Psi(0)$ prescribed as the ground state of $H(0)$. For the sake of an example let us attempt to compute $\Psi(1)$ for $\Delta = 1, T/\hbar=20$. In this instance the Taylor polynomial centered at $s=0$ will not give meaningful results: experiment shows that the algorithm outputs $\Psi(1)= [-0.0455-0.1663i, -0.4112 + 1.1521i]\cdot 10^{11}$ after $100$ iterations. Moreover one does not obtain correct convergence  by simply increasing the number of iterations. Indeed, the norm of the simulated $\Psi(1)$ will converge to the value of $\sim 1.05$ instead of the correct $\|\Psi(1)\|=1$. This is due to rounding error in the large intermediate terms. Indeed, all computations are done in floating point arithmetic with precision in the order of $10^{-16}$. This means that the difference between any two consecutive real numbers in the order of $10^{11}$ is as large as $10^{-5}$, which seriously impedes computational precision. In such a case a split of the interval into two parts, $0\le s \le 0.5$ and $0.5 \le s \le 1$, results in smaller intermediate values and ensures convergence to the correct limit.  This may be verified by a nearly exact calculation obtained via an application of, say, $200$, instalments, see Fig. 11. The figure also shows the expected nonmonotonic convergence to unity with increasing $T$, with oscillations determined by the time-averaged Larmor frequency.”
In addition, we note that a symbolic implementation of our method may be used to reproduce the Taylor series of the special function solution. However, numerical evaluation of special functions is frequently quite problematic, and so the calculation via instalments offers a more practicable approach even if the special function solutions were known in closed-form.

Finally, we have used this specific example to benchmark the accuracy of our numerical method (termed TPT for Target-Point-Taylor) by comparing it with the standard fourth-order Runge-Kutta method (RK or `ode45'), \cite{MC}. Based on a simple test we have observed the following: The TPT method gives the value of $$\Psi(1) \simeq [ 0.509629891598850 + 0.766898007985489\,i, -0.226356412675608 - 0.317659555887512\,i],$$ and $$ P \simeq 0.999801214304354$$ with just two steps computing 350 Taylor coefficients at each step. Re-computation based on $200$ intermediate steps shows that the value of $\Psi(1)$ is accurate to $10$ significant digits while the value of $P$ is accurate to $9$ significant digits. Secondly, we have attempted to reach maximal accuracy via the RK method. In order to obtain comparable accuracy ($7$ significant digits for $\Psi(1)$ and $9$ significant digits for $P$), namely
$$\Psi(1)\simeq [0.509629851149443 + 0.766898035396728\,i,	-0.226356395949650	- 0.317659566414614\,i],$$ and $$ P \simeq 0.999801214234416$$
we were forced to use a very fine time-discretization $ds = 1.86\times 10^{-9}$ as larger time-steps lead to greater error. This example suggests that for comparable accuracy the ratio of the number of steps required by RK to those required by TPT exceeds  $5.4\times 10^8 : 700$. As mentioned above, the ODE-solver based methods are impracticable for larger systems due to long computation time.

\vspace*{.2cm}

\noindent
\textbf{Remark 3.} While the formal analysis above suggests that the constant of proportionality between the number of Taylor coefficients $n$ and $N^6$ may be very large ($\sim \varepsilon^{-3}$), numerical evidence suggests that a rigid \emph{a priori} number, say $n = 50...90$ is typically sufficient to ensure numerical convergence. However, for a small fraction of a percentage of Ising Hamiltonians there is no convergence under these conditions. Nevertheless, in the type of experiment considered in this article those cases are statistically insignificant and may be discarded.
\vspace*{.2cm}

\noindent
\textbf{Remark 4.} We note that the recurrence algorithm does not require that all the Taylor series coefficient terms be kept in the memory. Indeed, in the case of unitary evolution with the dynamic variable being a vector $\Psi$ of length $2^N$, the memory requirement is of rank $3\cdot 2^N$, regardless of the number of coefficients $n$ that are kept in the approximation. Indeed, we only need to keep in the memory the variable $\Psi$, which is continually updated, and the two last Taylor coefficients, $\Psi_k$ and $\Psi_{k-1}$ as $k$ runs from $2$ to $n$. For a similar reason the memory requirement in the non-unitary case, when the dynamic variable is $\rho$ of size $2^{2N}$, is of rank $3\cdot 2^{2N}$.
\vspace*{.2cm}

 \section{Numerical experiments}\label{section_numer}
We have conducted numerical experiments for both a pure adiabatic process (\ref{Schr_of_s}) and the process with dissipation (\ref{Lind}) using matrix Hamiltonian as in (\ref{Hi}) and (\ref{Hf}). The choice of values for $J_{ij}$'s is generated via MATLAB's generic random number generators. It is done independently for every run of the experiments discussed below.  Note that $\|H_{i}\|=N$ and $\|H_{f}\|\leq N(N-1)/2$.

Following the discussion on Section IV, we divide the s-interval in $\sim T$ subintervals. Even though convergence in each subinterval is faster, the computational time is generally increased by a constant factor not exceeding the number of stages.

The numerical simulations were implemented using compiled MATLAB code. The running time was lessened by using sparse matrices and refactoring the formulas (\ref{auxil}) and (\ref{recurrence}) to reduce the number of matrix-vector calculations in favour of vector-vector calculations.
The construction of $H_i$ and $H_f$ was made much more efficient by exploiting the structure and symmetries of the matrices. Memory requirements were also lessened by taking advantage of the sparseness and symmetry in the matrix computations.
As a result, $N$ up to $24$ becomes computable on a modest distributed-computing server.

Appendix II provides the source code of our current MATLAB implementation of the algorithm. The \texttt{Main} function executes the simulation the indicated number of times, using a sparse matrix \texttt{H\_i} (the Heisenberg matrix representation of the initial Hamiltonian) and a different sparse vector
\texttt{H\_f} (the diagonal of the Random Ising Hamiltonian in the Heisenberg matrix representation) at each time.
Functions \texttt{Initial\_Hamiltonian} and \texttt{Random\_Ising} create those sparse structures using fractal-like iterations. Finally, function \texttt{Taylor\_Installments} calculates the Taylor series.

The histograms presented in Appendix I were produced using up to $180$ cores of 2.0GHz Intel E5-2640L Xeon processors distributed on the compute cluster PLATO of the University of Saskatchewan.
To illustrate the performance of the implemented algorithm, a total of $21780$ individual simulations were ran for $N=20$ in about 14 hours, and using less than 5GB of memory per simulation.

Figures 9 and 10 below
illustrate the evolution of the compute time per simulation as a function of the number of qubits. An exponential increase of the compute time for large numbers of qubits can be observed. We expect this to be the case for any number of qubits and, in consequence,
compute times for simulations of larger systems can be estimated.
The fact that this exponential tendency is not verified when the number of qubits is too small, is certainly due to the fact that MATLAB, and the server itself, requires some time for internal processes (other than the execution of the algorithm). It appears that this time is much larger than the execution time of the algorithm for these cases when the number of qubits is too small.

The crucial limiting factor in simulation of larger systems is memory capacity. However, early stage evidence shows that further algorithm optimizations, specialized to 3TB RAM server architectures, enables simulations of size up to 36 qubits. A detailed account of this ongoing work goes outside the scope of this article and will be reported elsewhere.

\subsection{Unitary adiabatic evolution: Schr\"{o}dinger's equation} \label{experiment_unitary}

We use a schema based on recurrence (\ref{recurrence}). Furthermore, we set $|\psi_0\rangle$ to be the ground state of $H_i$.    We note that $H_{f}$ is exponentially degenerate: the Hilbert space is $2^{N}$-dimensional but there are only $O(N)$ distinct eigenvalues, which are separated by multiples of $2$.  In particular, the Hamiltonian $H(s)$ is invariant under a global spin flip for all time, so adiabatic evolution occurs in the subspace spanned by equal superpositions of computational states and the corresponding spin-flipped states, $\left\{\left(|abc\cdots\rangle+|\bar a \bar b \bar c\cdots\rangle\right)/\sqrt2\right\}$.  Let $\Pi$ denote the projector onto the ground state space of $H_f$. (Note that $\Pi$ is computed numerically, which presents no difficulties.) The probability of success is defined as $P = \|\Pi\psi(1)\|^2$.

 The results from simulation are presented via probability of success histograms, Fig. 1, and Figs. 4--8. Of note is the ragged landscape feature of the histograms.  This can be understood qualitatively in the following way. The success probability is largely determined by the low-lying eigenvalue structure. The aforementioned degeneracy reduces the number of qualitatively different classes and we conjecture that the peaks correspond to different classes. We conjecture a connection between the low-P tail of the distribution of probability of success and the existence of optimization problems that are inherently hard.  Since the optimization problem at hand is known to be NP hard, \cite{Barahona}, we expect this phenomenon to persist for larger systems.

\subsection{Non-unitary adiabatic evolution: Master equation with one Lindblad operator}

In this setting the probability of success is defined as $P = \mbox{Tr}\,(\Pi \rho(1))$ where, again, $\Pi$ is the projector onto the ground state space of $H_f$. In order to compute $\rho(1)$ we use a schema  based on recurrence (in notation of Subsection \ref{master_subsection}):
\begin{equation}\label{recurrence_Q_rho}
\begin{array}{lll}
\rho_0 &=& |\psi_0\rangle\langle \psi_0|,\quad \mbox{ where $\psi_0$ denotes the ground state of } H_i \\
\rho_1  &=& \mathcal{L}_A \rho_0 \\
\vdots & &\\
\rho_{n+1} & = & \frac{1}{n+1}\mathcal{L}_A\left( \rho_n + \mathcal{L}_B \rho_{n-1}\right)\quad \mbox{ for } n\geq 1.
\end{array}
\end{equation}
 For the particular experiments presented in Figure 2 we took $L = 0.1\hat{a}$ where $\hat{a}$ is the  operator, which in the eigenbasis of $H_f$ assumes the off-diagonal form
 \begin{equation}\label{ahat}
\left[ \begin{array}{cccccc}
 0 &1 & 0& 0 & 0 & \ldots \\
 0 & 0 & \sqrt{2} & 0 & 0 & \ldots \\
 0 & 0 & 0 & \sqrt{3} & 0 & \ldots \\
 \vdots & \vdots & \vdots & \vdots & \vdots & \vdots
 \end{array}\right]
 \end{equation}
The basis states are in order of increasing energy, with basis states in the degenerate subspaces listed in a particular order.  The effect of this $H_{f}$-dependent Lindblad operator, in isolation, would be to relax the system to the selected ground state, thereby breaking the degeneracy.  However,  since it has a highly non-local action on the qubits, its interaction with $H(s)$ disturbs the adiabatic evolution and destroys the fine structure of the probability histogram, see Fig. 3.
\vspace{.1cm}

\noindent
\textbf{Acknowledgements.}
APS thanks the Department of Physics, Loughborough University, for hosting him in the autumn of 2013 and acknowledges partial support of the Canadian Foundation for Innovation, grant LOF \# 22117. The authors are grateful to the referees for pointing out ways to improve the presentation of this material.

\newpage

\newpage

\section*{Appendix I: Figures}

\begin{minipage}[t]{0.45\textwidth}
\vspace{1cm}
\includegraphics[width=\textwidth]{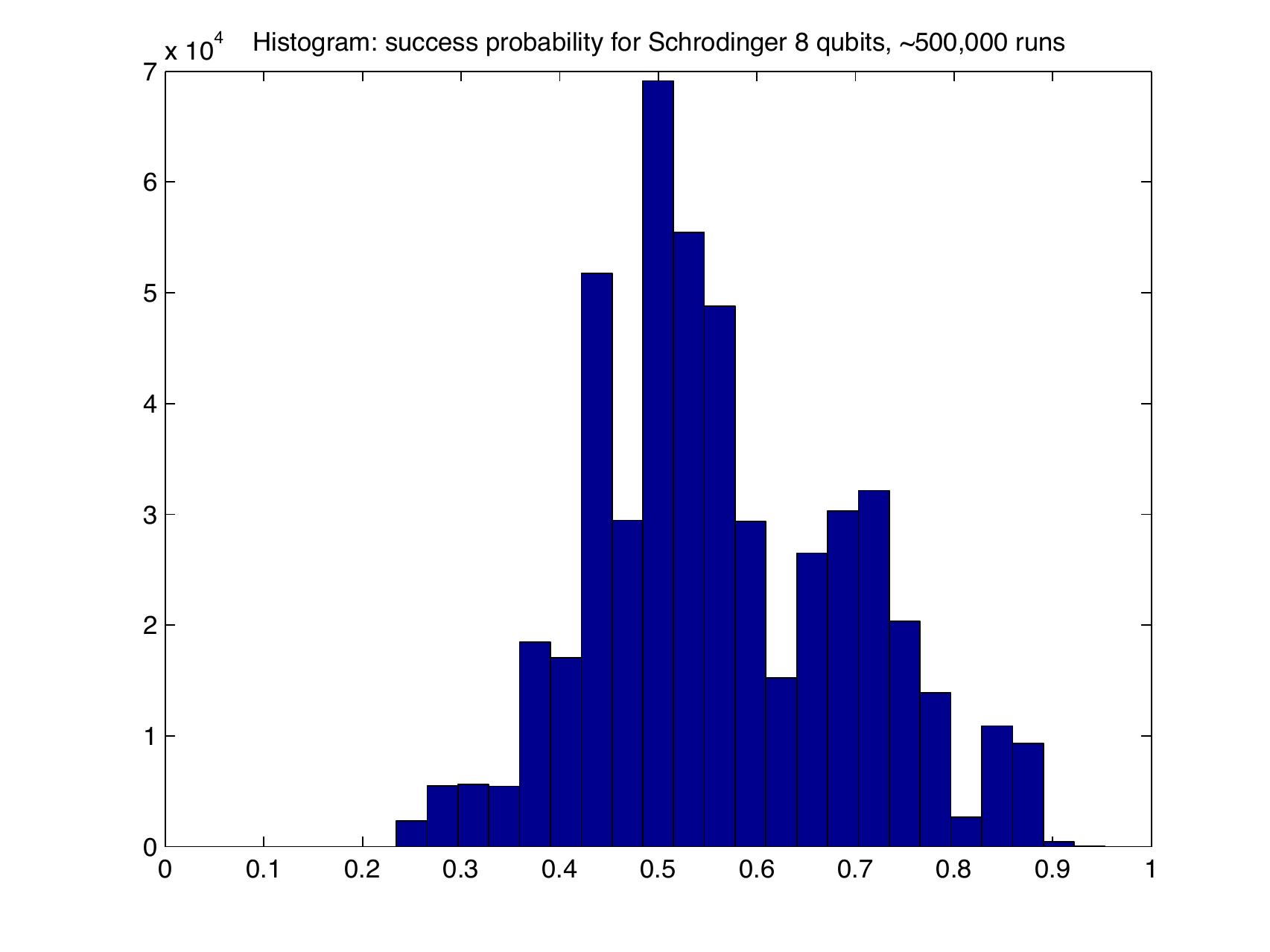}
\emph{Figure 1.} Results obtained via computation of Taylor series solutions of the 8 qubit adiabatic  Schr\"{o}dinger  equation with $T=4$. The histogram displays probability of success outcomes for $500,000$ random Ising Hamiltonians $H_f$ sorted into 32 bins.
\end{minipage}
\hspace{1cm}
\begin{minipage}[t]{0.45\textwidth}
\vspace{1cm}
\includegraphics[width=\textwidth]{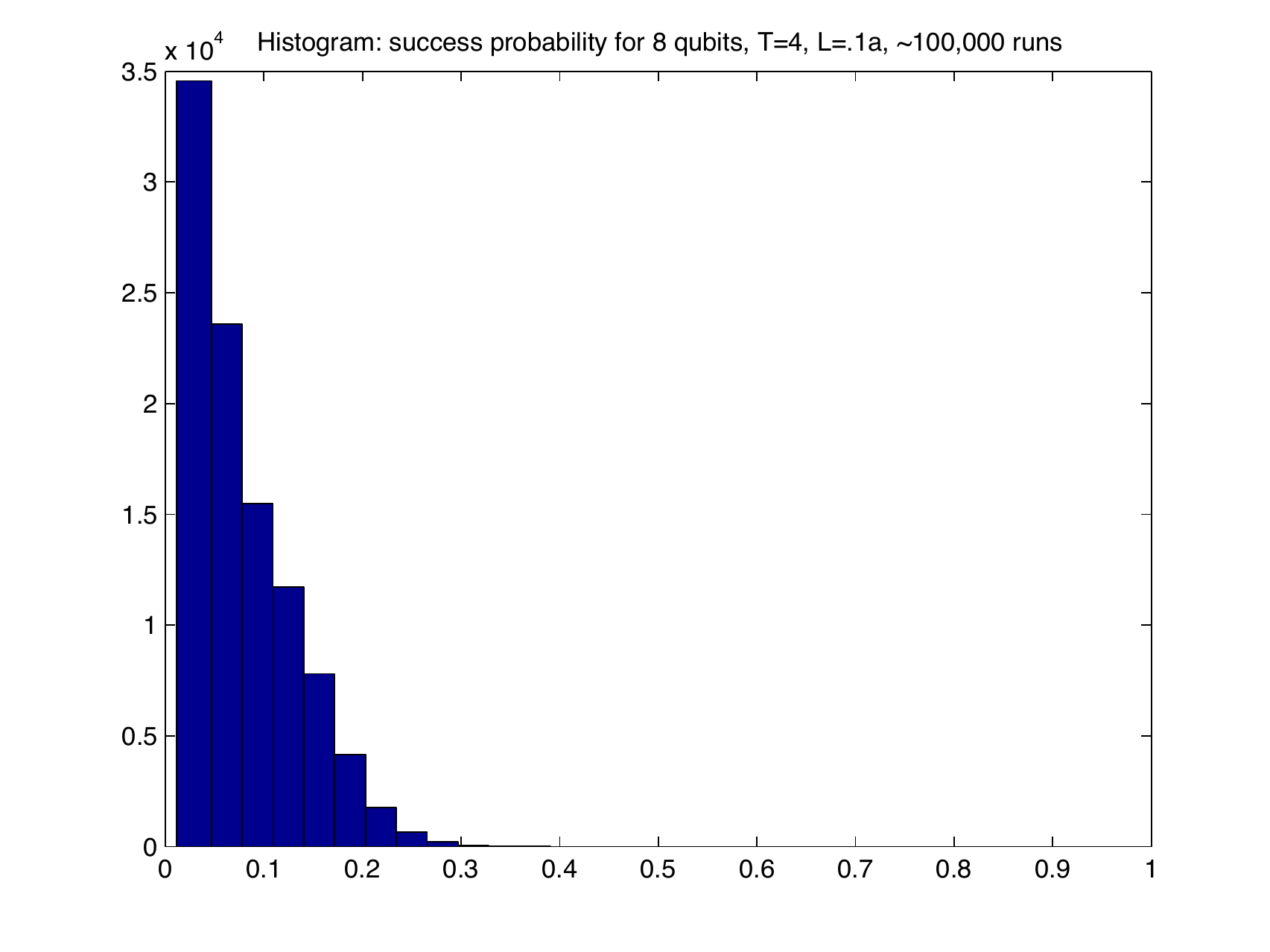}
\emph{Figure 2.} Results obtained via computation of Taylor series solutions of the 8 qubit master equation with the Lindblad operator $L=0.1 \hat{a}$ and $T=4$. The probabilities of success outcomes obtained for $100,000$ random Ising Hamiltonians $H_f$ are sorted into 32 bins.
\end{minipage}

\begin{minipage}[t]{0.45\textwidth}
\vspace{2cm}
\includegraphics[width=\textwidth]{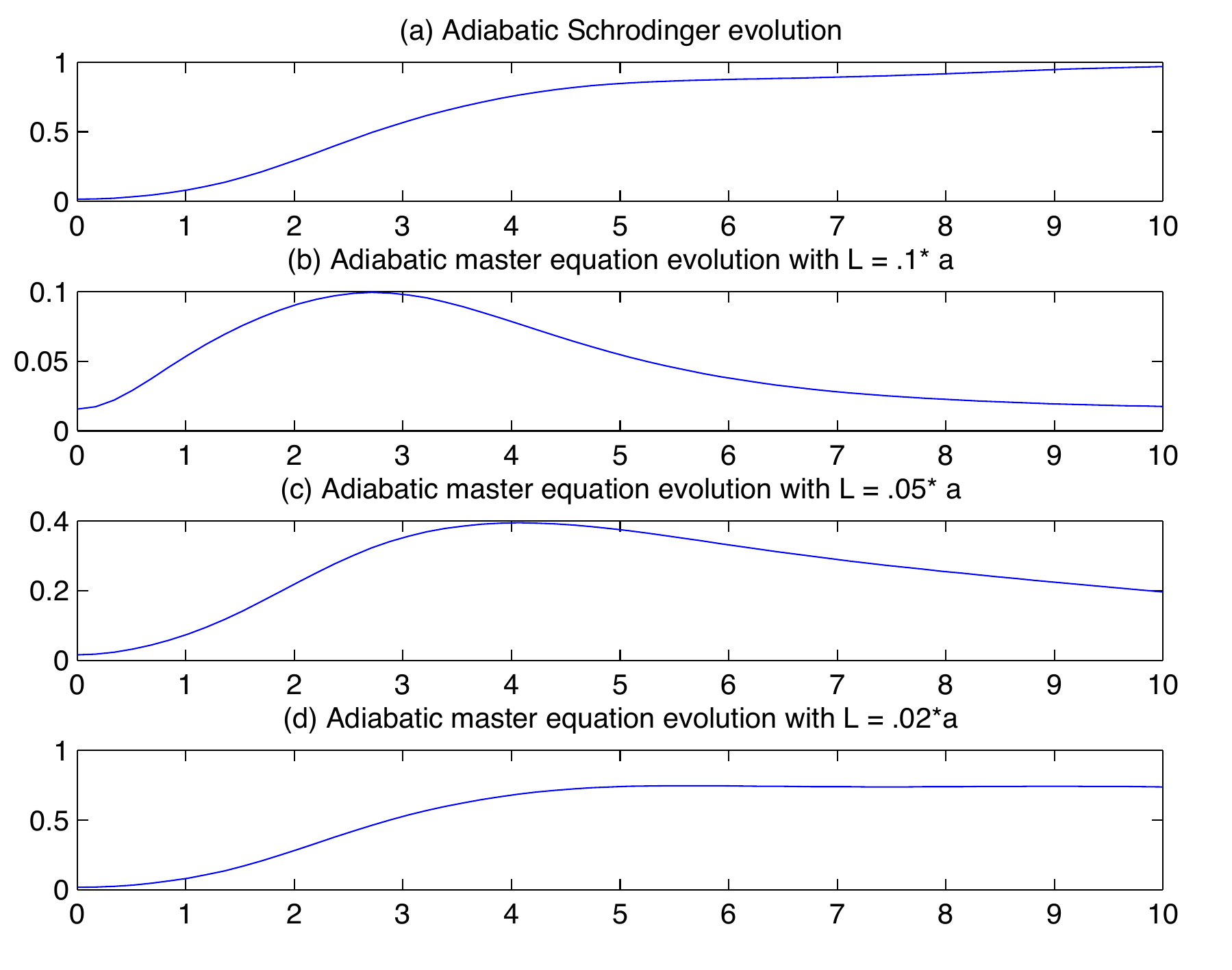}
\emph{Figure 3.} Dependence of the probability of success on $T$ for solutions of the 8 qubit Schr\"{o}dinger equation (a) and the master equations with the Lindblad operators $L=0.1 \hat{a}$ (b), $L=0.05 \hat{a}$ (c), and $L=0.02 \hat{a}$ (d).  Note the inapplicability of the adiabatic theorem in (b) and (c), as well as convergence to the Schr\"{o}dinger solution for diminishing norm of $L$.
 \end{minipage}
\hspace{1cm}
\begin{minipage}[t]{0.45\textwidth}
\vspace{2cm}
\includegraphics[width=\textwidth]{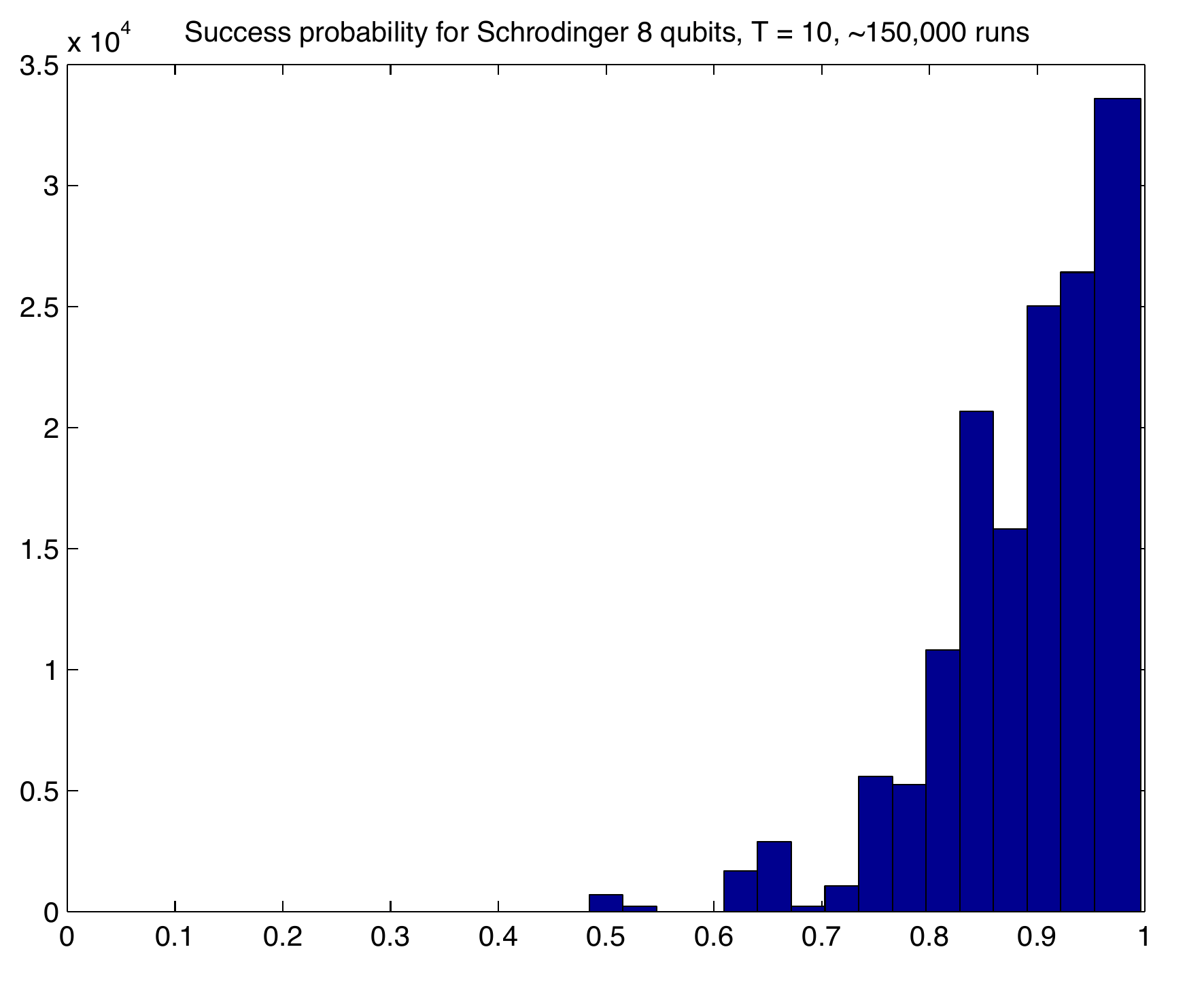}
\emph{Figure 4.} Schr\"odinger evolution as in Fig. 1 with $8$ qubits but with $T=10$. The histogram may be described as multi-modal.
\end{minipage}

\begin{minipage}[t]{0.45\textwidth}
\vspace{1cm}
\includegraphics[width=\textwidth]{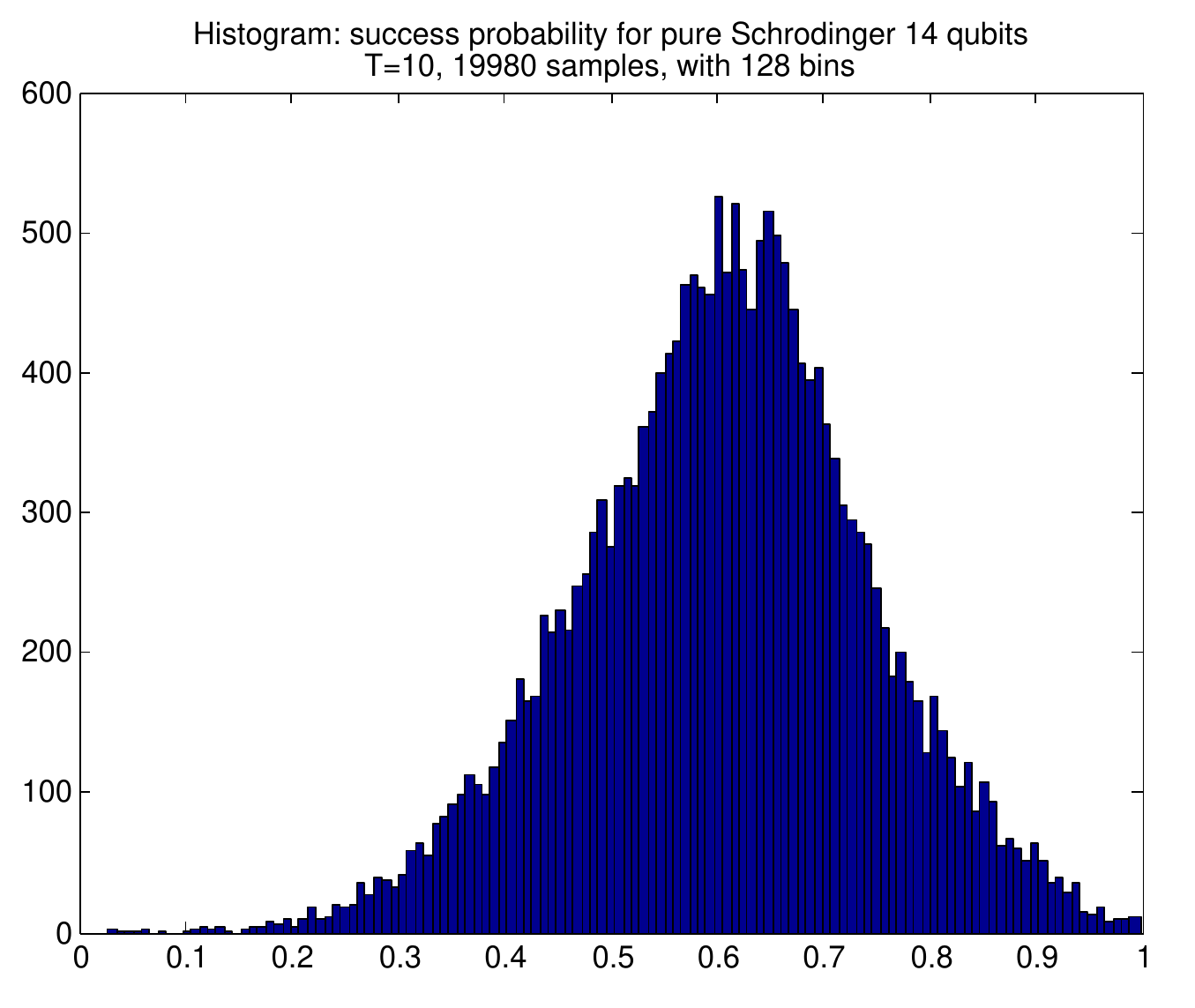}
\emph{Figure 5.} Schr\"odinger evolution with 14 qubits and $T=10$. The computation of these results took about 12 minutes using 90 cores.
\end{minipage}
\hspace{1cm}
\begin{minipage}[t]{0.45\textwidth}
\vspace{1cm}
\includegraphics[width=\textwidth]{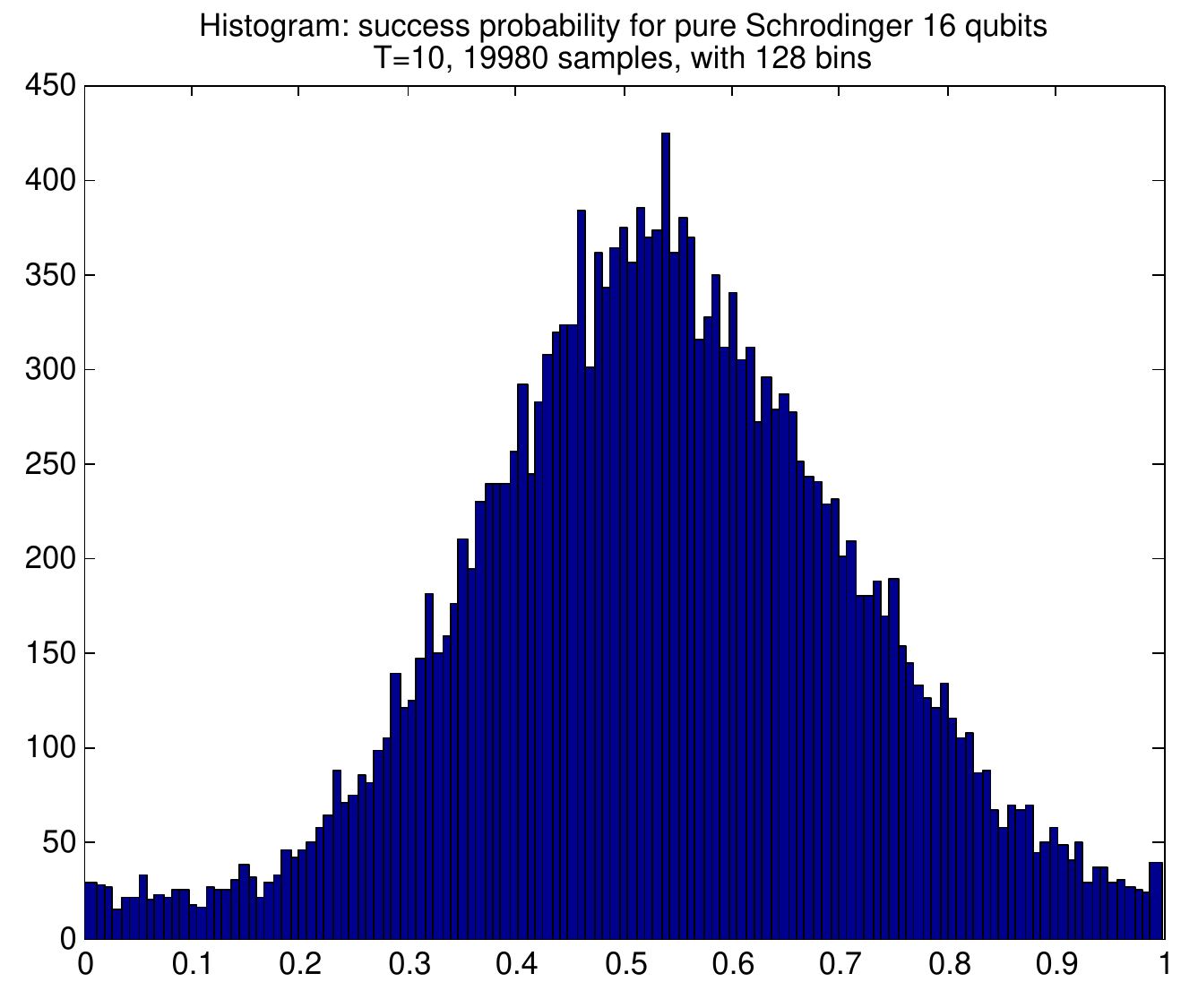}
\emph{Figure 6.} Schr\"odinger evolution with 16 qubits and $T=10$. Computation time with 90 cores was 1 hour and 15 minutes approximately.
\end{minipage}

\begin{minipage}[t]{0.45\textwidth}
\vspace{2cm}
\includegraphics[width=\textwidth]{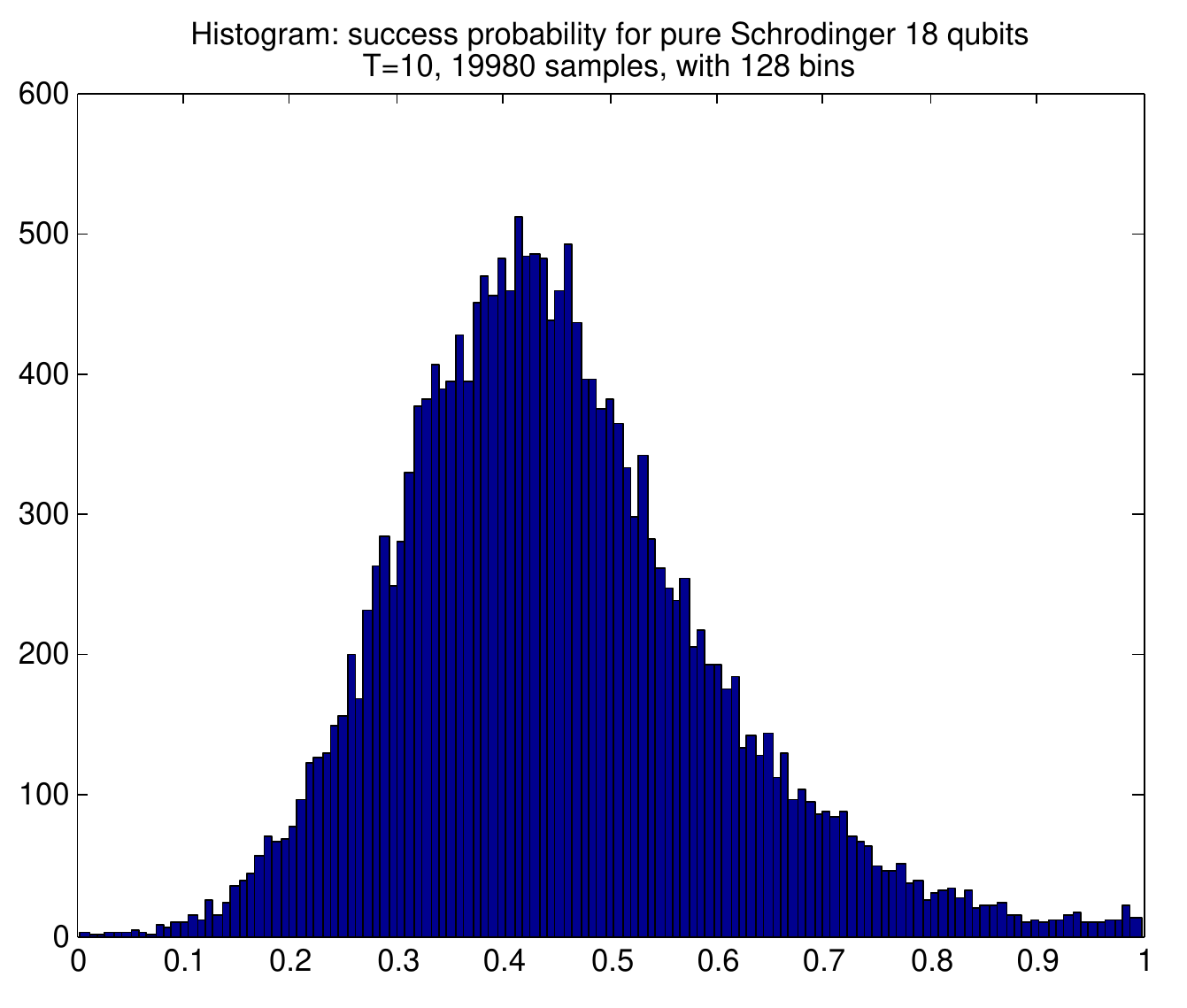}
\emph{Figure 7.} Schr\"odinger evolution with 18 qubits and $T=10$. Computation time was approximately 5 hours and 20 minutes with 90 cores.
\end{minipage}
\hspace{1cm}
\begin{minipage}[t]{0.45\textwidth}
\vspace{2cm}
\includegraphics[width=\textwidth]{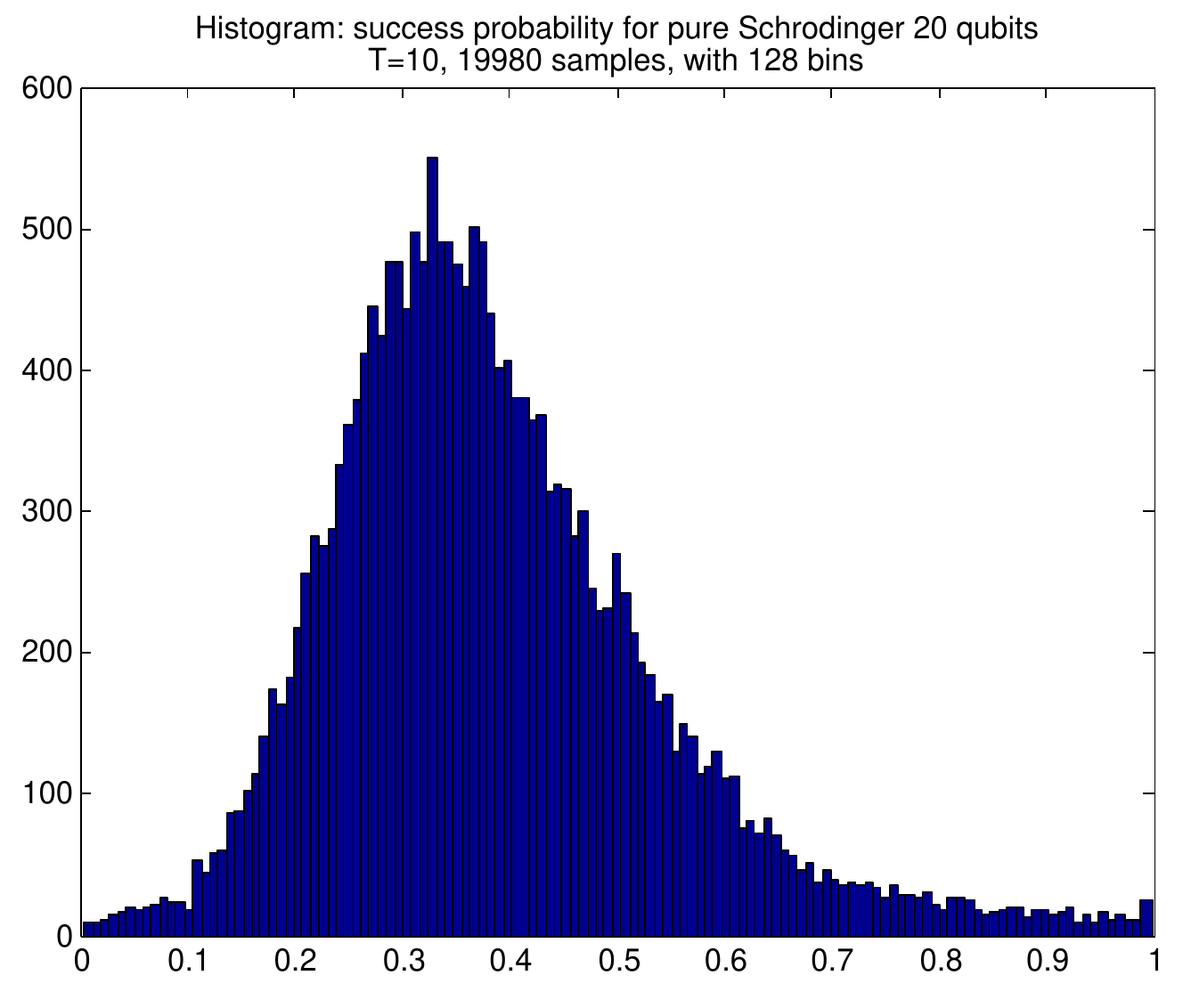}
\emph{Figure 8.} Schr\"odinger evolution with 20 qubits and $T=10$. Using 180 cores, these results were computed in 14 hours.
\end{minipage}

\begin{minipage}[t]{0.45\textwidth}
\vspace{2cm}
\includegraphics[width=\textwidth]{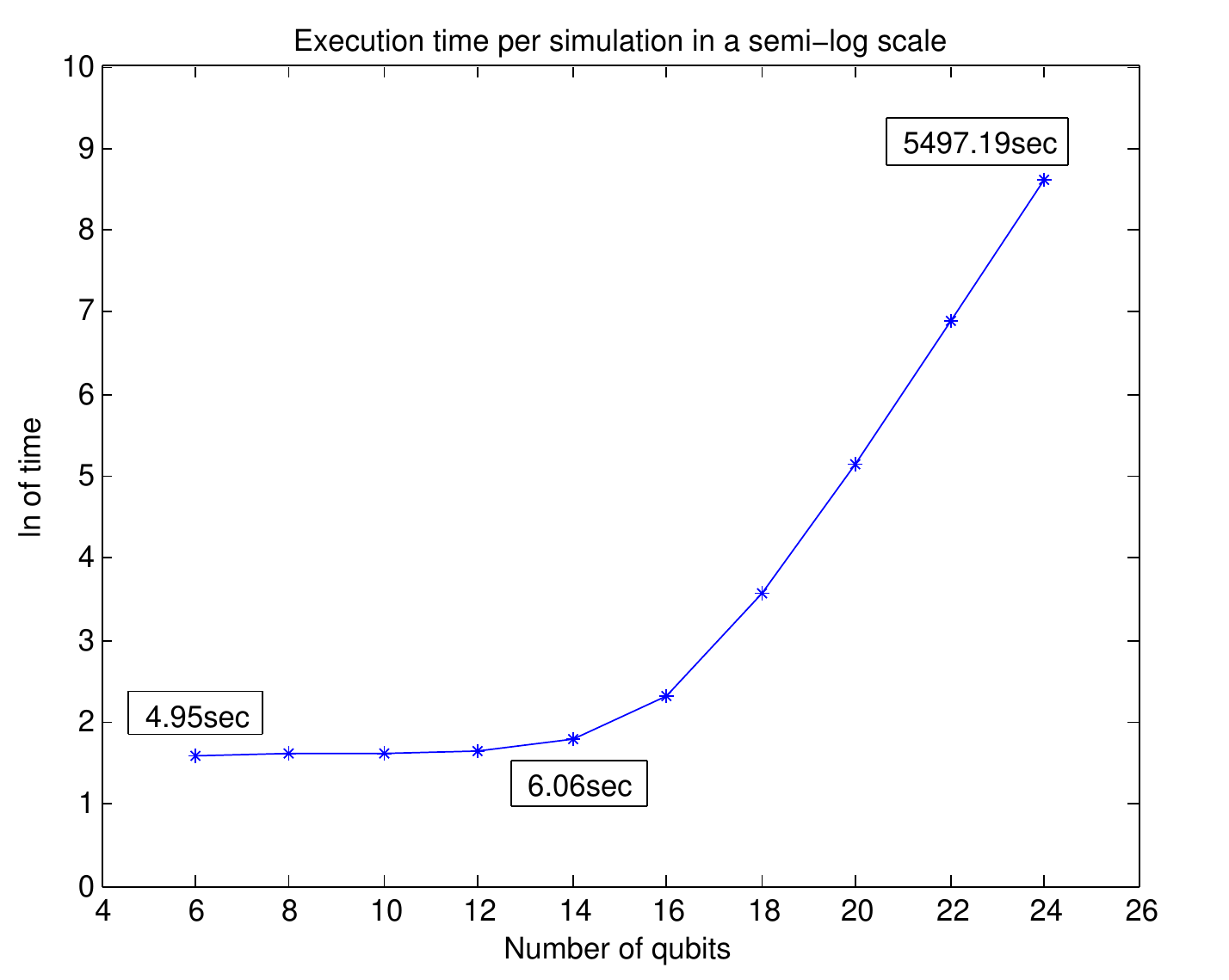}
\emph{Figure 9.} Execution time (wall time) of the algorithm as a function of the number of qubits. Our current implementation running on MATLAB implies a system time larger than the time
required by the algorithm when the number of qubits is too small.
\end{minipage}
\hspace{1cm}
\begin{minipage}[t]{0.45\textwidth}
\vspace{2cm}
\includegraphics[width=\textwidth]{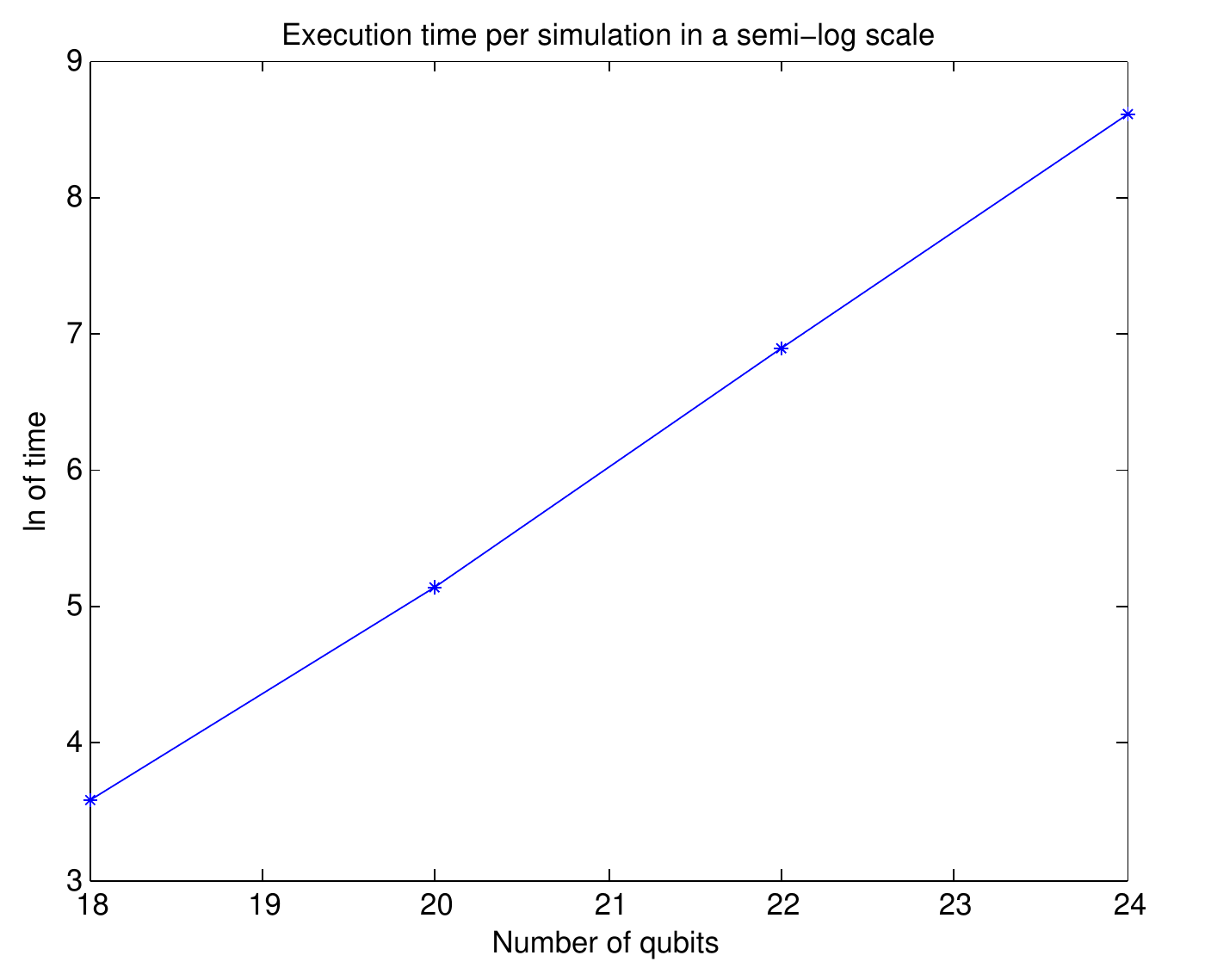}
\emph{Figure 10.} Execution time of the algorithm as a function of the number of qubits. A clear exponential behaviour is observed for larger number of qubits. The exponential function that fits the last
three points (x=20, 22 and 24) is given by: $y=ce^{mx}$ where $c\simeq 6 \mu s $, and $m=0.86$.
\end{minipage}

\begin{minipage}[t]{0.65\textwidth}
\vspace{1cm}
\includegraphics[width=\textwidth]{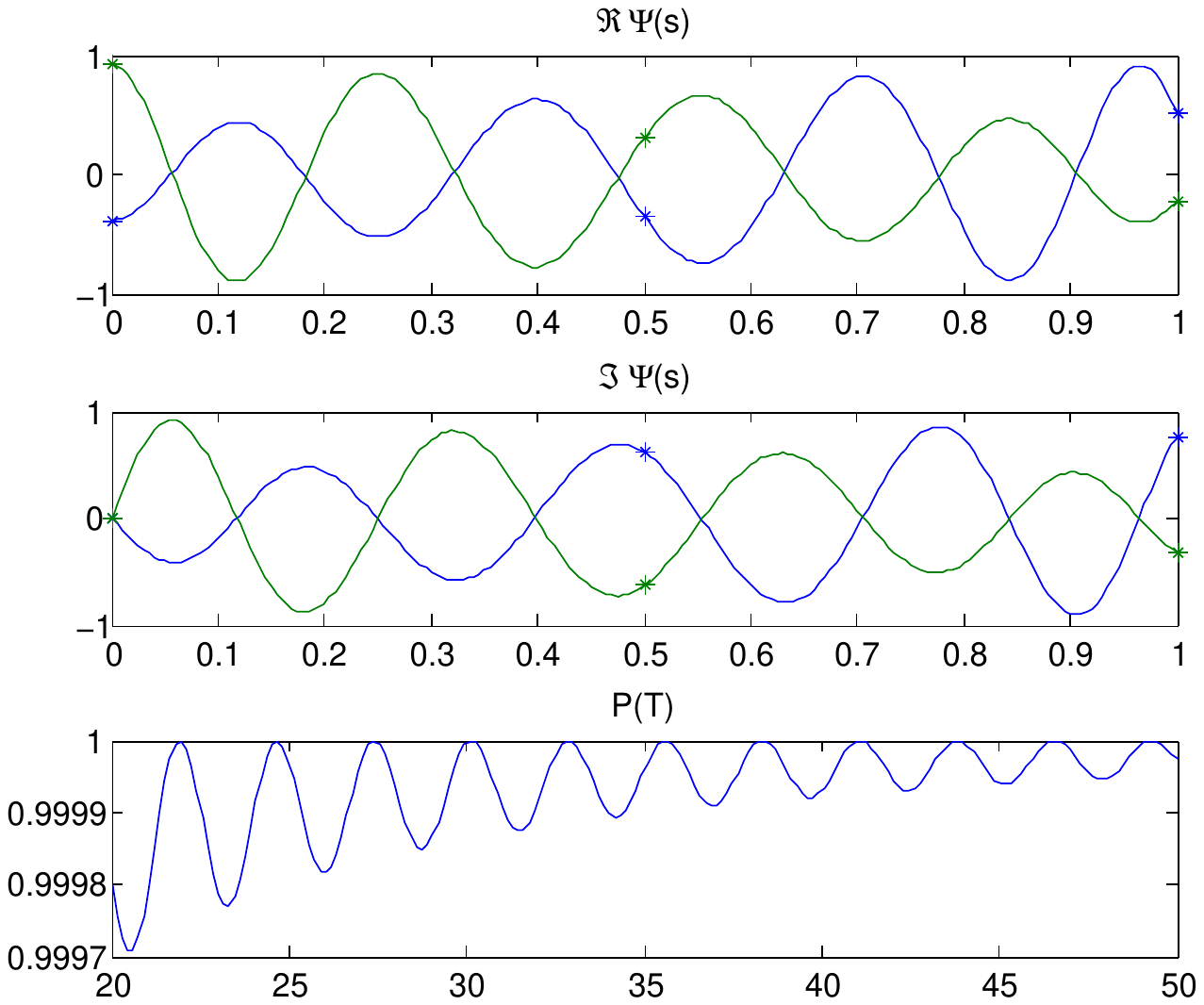}
\emph{Figure 11.} The top two figures show a numerical solution of (\ref{Schr_of_s}) with $T/\hbar = 20$ and the Landau-Zener Hamiltonian $H(s) = (1-2s)\sigma_z + \Delta \sigma_x$ obtained via condensing the Taylor series to $200$ small steps. The star marked points are obtained via computation in just two instalments. The first (resp. second) figure shows the real (resp. imaginary) parts of the two vector components. The bottom graph shows oscillations of the probability of success $P(T)$ as $T/\hbar$ varies between $20$ and $50$.
\end{minipage}

\newpage

\section*{Appendix II: Source codes}

\begin{center}
\framebox [0.9\textwidth]{
\begin{minipage}{0.87\textwidth}
\texttt{
function P=Main(runs,N,T,nint,epsilon)\\
\\
dim=2\^\,N;\\
psi\_0 = (1/sqrt(dim))*ones(dim/2,1);\\
H\_i=Initial\_Hamiltonian(N-1);\\
step=1/nint;\\
for n=1:runs   \\
\phantom{x}\ \ H\_f=Random\_Ising(N);\\
\phantom{x}\ \    ind=(H\_f == min(H\_f));\\
\phantom{x}\ \     psi\_in=psi\_0;\\
\phantom{x}\ \        for k = 1:nint\\
\phantom{x}\ \ \ \            psi\_in = Taylor\_Installements(H\_i,H\_f,T,psi\_in,step,tol,k);\\
\phantom{x}\ \        end\\
\phantom{x}\ \        P[n] = 2*norm(psi\_in(ind))\^\,2;\\
        end
}
\end{minipage}}
\end{center}
Function \texttt{Main}: Function that is being executed on each of the assigned cores on the cluster. Each core will obtains $n=$\texttt{runs} samples. Observe how the Taylor series is calculated for
\texttt{nint} segments in the s-interval. \texttt{N} is the number of qubits, \texttt{T} the final time and \texttt{tol} the tolerance passed to the function \texttt{Taylor\_Installements} below.
\bigskip

\begin{center}
\framebox [0.9\textwidth]{
\begin{minipage}{0.87\textwidth}
\texttt{
function H=Random\_Ising(d)\\
\\
dim=2\^\,d;\\
vals=zeros(1,dim/2);\\
for k=1:d-1\\
\phantom{x}\ \    temp=zeros(1,dim/2\^\,k);\\
\phantom{x}\ \     for r=k+1:d\\
\phantom{x}\ \ \ \     mac=[ones(1,dim/2\^\,r) -ones(1,dim/2\^\,r)];\\
   \phantom{x}\ \ \ \   for p=1:r-(k+1)\\
  \phantom{x}\ \ \ \ \ \        mac=[mac mac];\\
   \phantom{x}\ \ \ \      end\\
       \phantom{x}\ \ \ \  signo=2*randi(2,1)-3;\\
  \phantom{x}\ \ \ \      temp=temp+signo*mac;\\
\phantom{x}\ \    end\\
 \phantom{x}\ \  for p=1:k-1\\
  \phantom{x}\ \ \ \      temp=[temp fliplr(temp)];\\
\phantom{x}\ \     end\\
 \phantom{x}\ \    vals=vals+temp;\\
end\\
H=sparse(vals');
}
\end{minipage}}\\
\end{center}
Function \texttt{Random\_Ising}: Using a fractal-like iteration, creates a sparse vector \texttt{H} of dimension $2^\texttt{d}$ that represents the diagonal of an Ising Hamiltonian. Notice that this vector is palindrome, thus, only the first
$2^\texttt{d-1}$ values are returned.
\bigskip

\begin{center}
\framebox [0.9\textwidth]{
\begin{minipage}{0.87\textwidth}
\texttt{
function H=Initial\_Hamiltonian(d)\\
\\
dim=2\^\,d;\\
rows=[1 2];\\
cols=[2 1];\\
for k=1:d-1\\
\phantom{x}\ \    rows=[rows rows+2\^\,k];\\
\phantom{x}\ \    cols=[cols cols+2\^\,k];\\
\phantom{x}\ \    p=1:2\^\,k;\\
\phantom{x}\ \    rows=[rows p p+2\^\,k];\\
\phantom{x}\ \    cols=[cols p+2\^\,k p];\\
end\\
vals=-ones(1,size(rows,2));\\
H = sparse(rows, cols, vals, dim, dim);
}
\end{minipage}}\\
\end{center}
Function \texttt{Initial\_Hamiltonian}: Using a fractal-like iteration, creates a sparse matrix \texttt{H} (the initial Hamiltonian) of dimension $2^\texttt{d}$. Notice that for \texttt{N} qubits we use \texttt{Initial\_Hamiltonian(N-1)} due to the
symmetry in \texttt{H\_f}.
\bigskip

\begin{center}
\framebox [0.9\textwidth]{
\begin{minipage}{0.87\textwidth}
\texttt{
function psi=Taylor\_Installements(H\_i,H\_f\_diag,T,psi\_in,step,tol,k)\\
\\
\%******* Prepares Auxiliary Operators A and B of Eq. (\ref{auxil}) for the recurrence \\
c = -1i * T;\\
d = (k - 1) * step;\\
i\_by\_psi\_in = H\_i * psi\_in - flipud(psi\_in);\\
f\_by\_psi\_in = H\_f\_diag .* psi\_in;\\
psi\_n\_min\_2 = psi\_in;\\
psi\_n\_min\_1 = c * ((1-d) * i\_by\_psi\_in + d * f\_by\_psi\_in);\\
psi = psi\_n\_min\_1 * step + psi\_n\_min\_2;\\
nrm\_cor = 1;\\
n = 1;\\
i\_by\_min\_2 = i\_by\_psi\_in;\\
f\_by\_min\_2 = f\_by\_psi\_in;\\
\\
\%******** Implements the recurrence as in Eq. (\ref{recurrence}) \\
while (nrm\_cor > tol)\\
\phantom{x}\ \     n = n+1;\\
\phantom{x}\ \     i\_by\_min\_1 = H\_i * psi\_n\_min\_1 - flipud(psi\_n\_min\_1);\\
 \phantom{x}\ \    f\_by\_min\_1 = H\_f\_diag .* psi\_n\_min\_1;\\
 \phantom{x}\ \    psi\_n = (c/n)*((1-d) * i\_by\_min\_1 + d * f\_by\_min\_1 + f\_by\_min\_2 - i\_by\_min\_2);\\
 \phantom{x}\ \    cor = psi\_n*step\^\,n;\\
 \phantom{x}\ \    nrm\_cor = norm(cor);\\
\phantom{x}\ \     psi = psi + cor;\\
 \phantom{x}\ \    psi\_n\_min\_2 = psi\_n\_min\_1;\\
 \phantom{x}\ \    psi\_n\_min\_1 = psi\_n;\\
 \phantom{x}\ \    i\_by\_min\_2 = i\_by\_min\_1;\\
 \phantom{x}\ \    f\_by\_min\_2 = f\_by\_min\_1;\\
end
}
\end{minipage}}\\
\end{center}
Function \texttt{Taylor\_Installements}: Favouring vector-vector calculations, this function computes the Taylor series for the \texttt{k}-th segment of the s-interval. Notice that the calculation is stopped when the contribution at the next step
is smaller than the tolerance \texttt{tol}.

\end{document}